\def\nottoobig#1{{\hbox{$\left#1\vcenter to1.111\ht\strutbox{}\right.\n@space$}}}
\newtheorem{theorem}{Theorem}[section]
\newtheorem{definition}[theorem]{Definition}
\newtheorem{myclaim}[theorem]{Claim}
\newcommand{\e}{{\rm E}}
\newcommand{\nat}{{\mathbb N}}
\newcommand{\p}{{\rm P}}
\newcommand{\np}{{\rm NP}}
\newcommand{\poly}{{\rm poly}}
\def\nottoobig#1{{\hbox{$\left#1\vcenter
to1.111\ht\strutbox{}\right.\n@space$}}}
\newcommand{\prob}{{\rm Prob}}
\newcommand{\ie}{$\mbox{i.e.}$}
\newlength{\filength}
\newsavebox{\gcbox}
\sbox{\gcbox}{\framebox[\filength]{\rule{0ex}{2ex}}}
\newcommand{\qedblob}{\mbox{\rule[-1.5pt]{5pt}{10.5pt}}}
\def\literalqed{{\ \nolinebreak\hfill\mbox{\qedblob\quad}}}
\def\qed{\literalqed}
\newcommand{\singlespacing}{\let\CS=
\@currsize\renewcommand{\baselinestretch}{1}\tiny\CS}
\newcommand{\singlespacingplus}{\let\CS=
\@currsize\renewcommand{\baselinestretch}{1.25}\tiny\CS}
\newcommand{\doublespacing}{\let\CS=
\@currsize\renewcommand{\baselinestretch}{1.75}\tiny\CS}
\newcommand{\draftspacing}{\let\CS=
\@currsize\renewcommand{\baselinestretch}{2.0}\tiny\CS}
\def\zo{\{0,1\}}
\def\mapping{\rightarrow}
\def\@listI{\leftmargin\leftmargini \parsep 4.5pt plus 1pt minus 1pt\topsep6pt plus 2pt minus 2pt \itemsep  2pt plus 2pt minus 1pt}
\let\@listi\@listI
\newcommand{\bn}{B^{=n}}
\newcommand{\cd}{\rm{CD}}
\newcommand{\cnd}{\rm{CND}}
\author{
{Marius Zimand}
\thanks{ { The author is supported in part
by NSF grant CCF 1016158. URL: \tt  http://triton.towson.edu/\~{ }mzimand}.}}
\institute{
{Department of Computer and Information Sciences, Towson University,
Baltimore, MD, USA}
}
\date{ }
\title{On the optimal compression of sets in PSPACE}
\begin{document}

\maketitle

\begin{abstract} We show that if ${\rm DTIME}[2^{O(n)}]$ is not included in ${\rm DSPACE}[2^{o(n)}]$, then, for every set $B$ in PSPACE, all strings $x$ in $B$ of length $n$ can be represented by a string $compressed(x)$ of length at most $\log (|\bn|) + O(\log n)$, such that a polynomial-time algorithm, given $compressed(x)$, can distinguish $x$ from all the other strings in $\bn$. 
Modulo the $O(\log n)$ additive trem, this achieves the information-theoretical optimum for string compression. 
\end{abstract}

{\bf Keywords:} compression, time-bounded Kolmogorov complexity, pseudo-random generator.
\smallskip

\section{Introduction}
In many practical and theoretical applications in computer science, it is important to represent information in a compressed way. If an application handles strings $x$ from a finite set $B$, it is desirable to represent every $x$ by another shorter string $compressed(x)$ such that $compressed(x)$ describes unambigously the initial $x$. Regarding the compression rate, ideally, one would like to achieve the information-theoretical bound $|compressed(x)| \leq \log (|B|)$, for all $x \in B$. If a set $B$ is computably enumerable, a fundamental result in Kolmogorov complexity states that for all $x \in \bn$, $C(x) \leq \log(|\bn|) + O(\log n)$, where $C(x)$ is the Kolmogorov complexity of $x$, \ie, the shortest effective description of $x$ ($\bn$ is the set of strings of length $n$ in $B$). The result holds because $x$ can be described by its rank in the enumeration of $\bn$. However enumeration is typically a slow operation and, in many applications, it is desirable that the unambiguous description is not merely effective, but also efficient. This leads to the idea of considering a time-bounded version of Kolmogorov complexity. An interesting line of 
research~\cite{sip:c:randomness,bfl:j:boundedkolmogorov,blm:j:compression,lr:j:symmetry}, which we also pursue in this paper, focuses on the time-bounded distinguishing Kolmogorov complexity, $\cd^t(\cdot)$. We say that a program $p$ distinguishes $x$ if $p$ accepts $x$ and only $x$. $\cd^{t,A}(x)$ is the size of the smallest program that distinguishes $x$ and that runs in time $t(|x|)$ with access to the oracle $A$.  Buhrman, Fortnow, and Laplante~\cite{bfl:j:boundedkolmogorov} show that for some polynomial $p$, for every set $B$, and every string $x \in \bn$, $\cd^{p, \bn}(x) \leq 2 \log(|\bn|) + O(\log n)$. This is an important and very general result but the upper bound for the compressed string length is roughly  $2 \log(|\bn|)$ instead of $ \log(|\bn|)$, that one may hope. In fact, Buhrman, Laplante, and Miltersen~\cite{blm:c:compression},  show that for some sets $B$, the factor $2$ is necessary. There are some results where the upper bound is asymptotically $\log(|\bn|)$ at the price of weakening other parameters.
Sipser~\cite{sip:c:randomness} shows that the upper bound of $\log(|\bn|)$ can be achieved if we allow the distinguisher program to use polynomial advice: For every set $B$, there is a string $w_B$ of length $\poly(n)$ such that for every $x \in \bn$, $\cd^{poly, \bn}(x \mid w_B) \leq log (|\bn|) + \log \log(|\bn|) +O(1)$. Buhrman, Fortnow, and Laplante~\cite{bfl:j:boundedkolmogorov} show that $\log(|\bn|)$ can be achieved if we allow a few exceptions: For any $B$, any $\epsilon$, for all except a fraction of $\epsilon$ strings $x \in \bn$, $\cd^{poly, \bn}(x ) \leq log (|\bn|) + \poly\log (n \cdot 1/\epsilon)$. Buhrman, Lee, and van Melkebeek~\cite{blm:j:compression} show that for all $B$ and $x \in \bn$, $\cnd^{poly, \bn}(x ) \leq log (|\bn|) + O((\sqrt{\log(|\bn|)} + \log n) \log n)$, wher $\cnd$ is similar to $\cd$ except that the distinguisher program is nondeterministic.
\smallskip

Our main result shows that under a certain reasonable hardness assumption, the upper bound of $\log (|\bn|)$ holds for every set $B$ in PSPACE.
\smallskip

\emph{Main Result.} Assume that there exists $f \in \e$ that cannot be computed by circuits of size $2^{o(n)}$ with PSPACE gates. Then  for any $B$ in PSPACE, there exists a polynomial $p$ such that for every $x \in \bn$, 
\[
\cd^{p, \bn}(x) \leq \log(|\bn|) + O(\log n). 
\]

The main result is a corollary of the following stronger result: Under the same hardness assumption, the distinguisher program $p$ for $x$ of length $\log( |\bn|) + O(\log n)$ is simple conditioned by $x$, in the sense that $C^{\poly}(p \mid x) = O(\log n)$, where $C^{\poly}(\cdot)$ is the polynomial-time bounded Kolmogorov complexity.

We also consider some variations of the main result in which the set $B$ is in $\p$ or in $\np$. We show that the hardness assumption can be somewhat weakened by replacing the PSPACE gates with $\Sigma_3^p$ gates. We also show that the distinguisher program no longer needs oracle access to $\bn$ in case we allow it to be nondeterministic and $B$ is in $\np$.

The hardness assumption in the main result, which we call H1, states that there exists a function $f \in \e = \cup_{c\geq 0} {\rm DTIME}[2^{cn}]$ that cannot be computed by circuits of size $2^{o(n)}$ with PSPACE gates. This looks like a technical hypothesis; however,  Miltersen~\cite{mil:b:derandsurvey} shows that the more intuitive assumption ``$\e$ is not contained in ${\rm DSPACE}[2^{o(n)}]$'' implies H1. We note that this assumption (or related versions) has been used before in somewhat similar contexts. Antunes and Fortnow\cite{ant-for:c:polyunivmeasure} use a version of H1 (with the PSPACE gates replaced by $\Sigma_2^p$ gates) to show that the semi-measure $m^p(x) = 2^{-C^p(x)}$ dominates all polynomial-time samplable distributions. Trevisan and Vadhan~\cite{tre-vad:c:psamplextractor} use a version of H1 (with the PSPACE gates replaced by $\Sigma_5^p$ gates) to build for each $k$ a polynomial-time extractor for all distributions with min-entropy $(1-\delta)n$ that are samplable by circuits of size $n^k$.

\subsection{Discussion of technical aspects}
\label{s:technique}
We present the main ideas in the proof of the main result. The method is reminiscent of techniques used in the construction of Kolmogorov extractors in~\cite{fhpvw:c:extractKol,hit-pav-vin:c:Kolmextraction,zim:j:kolmextractsurvey}.  Let $B$ in PSPACE. To simplify the argument suppose that $|\bn|$ is a power of two, say $|\bn| = 2^k$. If we would have a polynomial-time computable function $E: \zo^n \mapping \zo^k$, whose restriction on $B$ is 1-to-1, then every $x \in \bn$ could be distinguished from the other elements of $\bn$ by $z = E(x)$ and we would obtain $\cd^{\poly, \bn}(x) \leq |z| + O(1) = \log(|\bn|) + O(1)$. We do not know how to obtain such a function $E$, but, fortunately, we can afford a slack additive term of $O(\log n)$ and therefore we can relax the requirements for $E$. We can consider functions $E$ of type $E: \zo^n \times \zo^{\log n} \mapping \zo^k$. More importantly, it is enough if $E$ is computable in polynomial time given an advice string $\sigma$ of length $O(\log n)$ and if every $z \in \zo^k$ has at most $O(n)$ preimages in $B \times \zo^{\log n}$. With such an $E$, the string $z = E(x, 0^{\log n})$ distinguishes $x$ from strings that do not map into $z$ and, using the general result of Buhrman, Fortnow, and Laplante~\cite{bfl:j:boundedkolmogorov}, with additional $2 \log n + O(1)$ bits we can distinguish $x$ from the other at most $O(n)$ strings that map into $z$. With such an $E$, we obtain for every $x \in \bn$ the desired $\cd^{\poly, \bn}(x) \leq |z| + |\sigma| + 2 \log n +O(1) = \log(|\bn|) + O(\log n)$.

Now it remains to build the function $E$. An elementary use of the probabilistic method shows that if we take $E: \zo^n \times \zo^{\log n} \mapping \zo^k$ at random, with high probability, every $z \in \zo^k$ has at most $7n$ preimages. The problem is that to compute a random $E$ in polynomial-time we would need its table and the table of such a function has size $\poly(N)$, where $N=2^n$. This is double exponentially larger than $O(\log n)$ which has to be the size of $\sigma$ from our discussion above.

To reduce the size of advice information (that makes  $E$ computable in polynomial time) from $\poly(N)$ to $O(\log n)$, we derandomize the probabilistic construction in two steps.

In the first step we observe that counting (the number of preimages of $z$) can be done with sufficient accuracy by circuits of size $\poly(N)$ and constant-depth using the result of Ajtai~\cite{ajt:j:constantdepthcount}. We infer that there exists a circuit $G$ of size $\poly(N)$ and constant depth such that $\{E \mid \mbox{ every $z$ has $\leq 7n$ preimages in $B \times \zo^{\log n}$}\} \subseteq \{ E \mid G(E) =1 \} \subseteq \{E \mid \mbox{ every $z$ has $\leq 8n$ preimages in $B \times \zo^{\log n}$}\}$.
Now we can utilize the Nisan- Wigderson~\cite{nis-wig:j:hard} pseudo-random generator $\mbox{NW-gen}$ against constant-depth circuits and we obtain that, for most seeds $s$ (which we call good seeds for $\mbox{NW-gen}$), $\mbox{NW-gen}(s)$ is the table of a function $E$ where each element $z \in \zo^k$
has at most $8n$ preimages in $B \times \zo^{\log n}$. This method is inspired by the work of Musatov~\cite{mus:t:spacekolm}, and it has also been used in~\cite{zim:c:symkolm}.  The seed $s$ has size $\poly \log (N) = \poly(n)$, which is not short enough.

In the second step we use the Impagliazzo-Wigderson pseudo-random generator~\cite{imp-wig:c:pbpp} as generalized by Klivans and van Melkebeek~\cite{km:j:prgenoracle}. We observe that checking that a seed $s$ is good for $\mbox{NW-gen}$ can be done in PSPACE, and we use the hardness assumption to infer the existence of a pseudo-random generator $H$ such that for most seeds $\sigma$ of $H$ (which we call good seeds for $H$), $H(\sigma)$ is a good seed for $\mbox{NW-gen}$.
We have $|\sigma| = \log |s| = O(\log n)$ as desired. Finally, we take our function $E$ to be the function whose table is $\mbox{NW-gen}(H(\sigma))$, for some good seed $\sigma$ for $H$. It follows that, given $\sigma$, $E$ is computable in polynomial time and that every $z \in \zo^k$ has at most $8n$ preimages in $\bn \times \zo^{\log n}$. 

The idea of the $2$-step derandomization has been used by Antunes and Fortnow~\cite{ant-for:c:polyunivmeasure} and later by Antunes, Fortnow, Pinto and Souza~\cite{afps:c:lowdepthwit}.

\section{Preliminaries}
\subsection{Notation and basic facts on Kolmogorov complexity}
We work over the binary alphabet $\zo$. A string is an element of $\{0,1\}^*$.
If $x$ is a string, $|x|$ denotes its length; if $B$ is a finite set, $|B|$ denotes its size.  If $B \subseteq \zo^*$, then $\bn = \{x \in B \mid |x| = n\}$.

The Kolmogorov complexity of a string $x$ is the length of the shortest program that prints $x$. 
The $t$-time bounded Kolmogorov complexity of a string $x$ is the length of the shortest program that prints $x$ in at most $t(|x|)$ steps. The $t$-time bounded distinguishing Kolmogorov complexity of a string $x$ is the length of the shortest program that accepts $x$ and only $x$ and runs in at most $t(|x|)$ steps. The formal definitions are as follows.

We fix an universal Turing machine $U$, which is able to simulate any other Turing machine with only a constant additive term overhead in the program length. The Kolmogorov complexity of the string $x$ conditioned by string $y$, denoted $C(x \mid y)$, is the length of the shortest  string $p$ (called a program) such that $U(p,y) = x$. In case $y$ is the empty string, we write $C(x)$. 

For the time-bounded versions of Kolmogorov complexity, we fix an universal machine $U$, that, in addition to the above property, can also simulate any Turing machine $M$ in time $t_M(|x|) \log t_M(|x|)$, where $t_M(|x|)$ is the running time of $M$ on input $x$. For a time bound $t(\cdot)$, the $t$-bounded Kolmogorov complexity of $x$ conditioned by $y$, denoted $C^t(x \mid y)$, is the length of the shortest program $p$ such that $U(p,y) = x$ and $U(p,y)$ halts in at most $t(|x|+ |y|)$ steps.

The $t$-time bounded distinguishing complexity of $x$ conditioned by $y$, denoted $\cd^t(x \mid y)$ is the length of the shortest program $p$ such that

(1) $U(p,x,y)$ accepts,

(2) $U(p,v,y)$ rejects for all $v \not= x$,

(3) $U(p,v,y)$ halts in at most $t(|v| + |y|)$ steps for all $v$ and $y$.

In case $y$ is the empty string $\lambda$, we write $\cd^t(x)$ in place of $\cd^t(x \mid \lambda)$.  If $U$ is an oracle machine, we define in the similar way, $\cd^{t,A}(x \mid y)$ and $\cd^{t,A}(x)$, by allowing $U$ to query the oracle $A$.

For defining $t$-time bounded nondeterministic distinguishing Kolmogorov complexity, we fix  $U$ a nondeterministic universal machine, and we define $\cnd^{t} (x \mid y)$ in the similar way.

\if01
There are several versions of this notion. We use here  the \emph{plain complexity}, denoted $C(x)$, and also the \emph{conditional plain complexity} of a string $x$ given a string $y$, denoted $C(x \mid y)$, which is the length of the shortest effective description of $x$ given $y$. The formal definitions are as follows.
We work over the binary alphabet $\zo$. A string is an element of $\{0,1\}^*$.
If $x$ is a string, $|x|$ denotes its length.  
Let $M$ be a Turing machine that takes two input strings and outputs one string. For any strings $x$ and $y$, define the \emph{Kolmogorov complexity} of $x$ conditioned by $y$ with respect to $M$, as 
$C_M(x \mid y) = \min \{ |p| \mid M(p,y) = x \}$.
There is a universal Turing machine $U$ with the following property: For every machine $M$ there is a constant $c_M$ such that for all $x$, $C_U(x \mid y) \leq C_M(x \mid y) + c_M$.
We fix such a universal machine $U$ and dropping the subscript, we write $C(x \mid y)$ instead of $C_U(x \mid y)$. We also write $C(x)$ instead of $C(x \mid \lambda)$ (where $\lambda$ is the empty string). The \emph{randomness rate} of a string $x$ is defined as ${\rm rate}(x) = \frac{C(x)}{|x|}$.  If $n$ is a natural number, $C(n)$ denotes the Kolmogorov complexity of the binary representation of $n$. For two $n$-bit strings $x$ and $y$, the information in $x$ about $y$ is denoted $I(x : y)$ and is defined as $I(x : y) = C(y \mid n) - C(y \mid x)$.

In this paper, the constant hidden in the $O(\cdot)$ notation only depends on the universal Turing machine.

For all $n$ and $k \leq n$, $2^{k-O(1)} < |\{x \in \zo^n \mid C(x\mid n) < k\}| < 2^k$.

\fi

Strings $x_1, x_2, \ldots, x_k$ can be encoded in a self-delimiting way (\ie, an encoding from which each string can be retrieved) using $|x_1| + |x_2| + \ldots + |x_k| + 2 \log |x_2| + \ldots + 2 \log |x_k| + O(k)$ bits. For example, $x_1$ and $x_2$ can be encoded as $\overline{(bin (|x_2|)} 01 x_1 x_2$, where $bin(n)$ is the binary encoding of the natural number $n$ and, for a string $u = u_1 \ldots u_m$, $\overline{u}$ is the string $u_1 u_1 \ldots u_m u_m$ (\ie, the string $u$ with its bits doubled).

\subsection{Distinguishing complexity for strings in an arbitrary set}
As mentioned,  Buhrman, Fortnow and Laplante~\cite{bfl:j:boundedkolmogorov}, have shown that for any set $B$ and for every $x \in \bn$ it holds that
$\cd^{\poly, \bn} (x) \leq 2 \log (|\bn|) + O(\log n)$. We need  the following more explicit statement of their result.
\begin{theorem}
\label{t:bfl}
There exists a polynomial-time algorithm $A$ such that for every set $B \subseteq \zo^*$, every $n$, every $x \in \bn$, there exists a string $p_x$ of length $|p_x| \leq 2 \log(|\bn|) + O(\log n)$ such that 
\begin{itemize}
	\item [$\bullet$ ] $A(p_x, x) = $ accept,
	\item [$\bullet$ ] $A(p_x, y) = $ reject, for every $y \in \bn - \{x\}$.
\end{itemize}

\end{theorem}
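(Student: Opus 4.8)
The plan is to prove Theorem~\ref{t:bfl} by number-theoretic fingerprinting in the style of Rabin and Karp. Fix $n$, write $m = |\bn|$, and identify each string in $\zo^n$ with an integer in $[0,2^n)$. For distinct $x,y \in \zo^n$ the integer $x-y$ is nonzero with $|x-y| < 2^n$, so it has at most $n-1$ distinct prime divisors (if $q_1 < \cdots < q_k$ all divide it then $2^k \le q_1\cdots q_k \le |x-y| < 2^n$). Hence, for a fixed $x \in \bn$, the set of \emph{bad} primes $p$ --- those for which $x \equiv y \pmod p$ for some $y \in \bn\setminus\{x\}$ --- has size at most $(m-1)(n-1) < mn$. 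By Chebyshev's bound $\pi(N) = \Omega(N/\log N)$, so there is an $N = O(mn\log(mn))$ with $\pi(N) > mn$; therefore a prime $p^{\ast} \le N$ that is \emph{good for $x$} (i.e.\ not bad) exists.

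The program for $x$ is $p_x = \langle\, p^{\ast},\, x \bmod p^{\ast}\,\rangle$, using the self-delimiting pairing from the preliminaries. Since $m \le 2^n$ we get $\log N = \log m + O(\log n)$, so each of the two components occupies at most $\log m + O(\log n)$ bits and the pairing adds only $O(\log\log N) = O(\log n)$ more; thus $|p_x| \le 2\log(|\bn|) + O(\log n)$. The algorithm $A$, given an input pair $(q,y)$ and oracle access to $\bn$, parses $q$ as $\langle p, r\rangle$ (rejecting on a parse failure), queries whether $y \in \bn$ (rejecting if not), and accepts iff $y \bmod p = r$; this runs in time polynomial in $|q| + |y|$ and is a single algorithm independent of $B$, $n$, and $x$.

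Correctness is then immediate: $A^{\bn}(p_x, x)$ accepts since $x \in \bn$ and $x \bmod p^{\ast}$ is exactly the stored residue, while for $y \in \bn \setminus \{x\}$ the goodness of $p^{\ast}$ gives $p^{\ast} \nmid (x-y)$, hence $y \bmod p^{\ast} \ne x \bmod p^{\ast}$ and $A^{\bn}(p_x, y)$ rejects. (Any $y \notin \bn$, in particular any $y$ of length $\ne n$, is rejected at the oracle query.)

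I expect the only step requiring real care to be the quantitative one: pinning down $N = \Theta(mn\log(mn))$ so that $\pi(N)$ genuinely exceeds the bound $mn$ on bad primes, and then checking that $\log N$ together with the pairing overhead stays within $2\log m + O(\log n)$ in total. This is routine given the prime-counting estimate. It is worth noting that we only need the \emph{existence} of a good prime, since $p_x$ carries $p^{\ast}$ explicitly; no efficient search for $p^{\ast}$ is required for this theorem.
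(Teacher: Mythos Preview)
The paper does not give its own proof of Theorem~\ref{t:bfl}; it is stated in the preliminaries as a result of Buhrman, Fortnow, and Laplante and cited without argument. Your prime-fingerprinting construction is in fact the standard proof from that source, and the counting (at most $(m-1)(n-1)$ bad primes) and length estimates ($\log N = \log m + O(\log n)$ via $m \le 2^n$) are correct.

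One small discrepancy with the statement as written: the theorem gives $A$ no oracle, and none is needed. The second bullet only requires $A(p_x,y)$ to reject for $y \in \bn \setminus \{x\}$, not for arbitrary $y \neq x$, so you should simply drop the oracle query ``$y \in \bn$?'' from your description of $A$. The goodness of $p^{\ast}$ already forces $y \bmod p^{\ast} \neq x \bmod p^{\ast}$ for every $y \in \bn\setminus\{x\}$, and nothing is claimed about $y \notin \bn$. This also matches how the theorem is invoked in the proof of Theorem~\ref{t:maintechnical}: the membership test ``$y \in \bn$?'' is performed by the outer distinguisher using its oracle, and $A$ is called afterwards on inputs already known to lie in $\bn$ (in fact in the small set $U$).
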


\subsection{Approximate counting via polynomial-size constant-depth circuits}
We will need to do counting with constant-depth polynomial-size circuits. Ajtai~\cite{ajt:j:constantdepthcount} has shown that this can be done with sufficient precision.

\begin{theorem}
\label{t:ajtai}
({\bf{Ajtai's approximate counting with polynomial size constant-depth circuits}})
There exists a uniform family of circuits $\{G_n\}_{n \in \nat}$, of polynomial size and constant depth, such that for every $n$, for every $x \in \zo^n$, for every $a \in \{0, \ldots, n-1\}$, and for every $\epsilon > 0$,
\begin{itemize}
	\item [$\bullet$ ] If the number of $1$'s in $x$ is $\leq (1 - \epsilon)a$, then $G_n(x,a,1/\epsilon) = 1$,
	\item [$\bullet$ ] If the number of $1$'s in $x$ is $\geq (1 + \epsilon)a$, then $G_n(x,a,1/\epsilon) = 0$.
\end{itemize}
\end{theorem}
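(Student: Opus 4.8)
The plan is to construct $G_n$ by the probabilistic method, via hashing, and then to deal with uniformity. Fix an input $x$ with $w$ ones, a threshold $a$, and a precision $\epsilon$; the behaviour of $G_n$ is constrained only when $w \le (1-\epsilon)a$ or $w \ge (1+\epsilon)a$, so what we need is an \emph{approximate threshold gate}. I would first record the easy half of the job: a depth-$3$ circuit achieving only a \emph{coarse}, factor-$\Theta(\log n)$, version of the gate. The gadget is to hash the coordinate set $[n]$ into $B$ buckets by a map $h$ and ask whether every bucket contains some coordinate $i$ with $x_i=1$; for a fixed $h$ this is an $\AND$ over buckets of an $\OR$ over the coordinates falling in the bucket, hence depth $2$ and size $O(n)$. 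If $w<B$ then for \emph{every} $h$ some bucket is empty, so the gadget outputs $0$; if $w\ge c\,B\ln B$ then for a \emph{random} $h$ the chance that some bucket is empty is at most $B(1-1/B)^{w}\le B^{1-c}$. Taking the $\OR$ of this gadget over $T=O(n/\log B)$ independently chosen hashes $h_1,\dots,h_T$ produces a depth-$3$ circuit which, by a union bound over the $\le 2^{n}$ inputs in the ``many ones'' regime, simultaneously separates $w<B$ from $w\ge c\,B\ln B$; this union bound is what pins down $T$. Setting $B\approx a$ and complementing gives the claimed $G_n(x,a,\cdot)$, but with the wrong, $\Theta(\log a)$, multiplicative gap.

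To push the gap towards $1+\epsilon$ I would refine the estimate iteratively: the gadget above really measures the number $Z$ of \emph{occupied} buckets, which is itself the weight of a $B$-bit string, and comparing $Z$ against an appropriate fraction of $B$ separates input weights differing by a factor $1+\Theta(\delta)$ rather than $\Theta(\log)$; re-running an approximate-counting circuit on the occupancy vector and tuning $\delta$ over $O(\log\log n + \log(1/\epsilon))$ stages would drive the multiplicative error down to $1+\epsilon$, each stage costing only $O(1)$ in depth. Instantiating the parameters and reading off the resulting depth and size would then yield the theorem.

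The step I expect to be the genuine obstacle --- and the substance of Ajtai's result --- is collapsing this refinement to \emph{constant} depth: the straightforward iteration above costs $\Theta(\log\log n)$ depth (and the natural recursion does not even shrink the instance, since the occupancy vector has $\approx a$ bits), so instead of recursing one must analyse a single bounded-round hashing process carried out ``in parallel'' inside one $O(1)$-depth circuit and argue that the union bound over all $2^{n}$ inputs, and over all admissible $a$ and $\epsilon$, still closes. I expect this to require replacing the ad hoc hash functions by explicit dispersing structures --- $k$-wise independent families, or the constant-depth expander/comparator gadgets that underlie bounded-depth $\epsilon$-halvers --- which also addresses the second obstacle, \emph{uniformity}: the wiring of $G_n$ must be computable in polynomial time from $n$ while the gadgets still carry enough pseudorandomness for the coupon-collector and Chernoff estimates above. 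Once constant depth and uniformity are in hand, the remaining bookkeeping (handling $a$ near $0$, encoding $1/\epsilon$, composing the stages) is routine.
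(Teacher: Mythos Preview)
The paper does not prove this theorem: it is stated as a background result and attributed to Ajtai (reference \cite{ajt:j:constantdepthcount}), with no proof given. Immediately after the statement the paper in fact remarks that the full uniform version is not needed, and that the weaker non-uniform, depth-$3$, fixed-$\epsilon$ version --- which is all the main argument uses --- has a much simpler proof, for which it cites Viola's survey. So there is nothing in the paper to compare your proposal against.

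That said, as a sketch of Ajtai's theorem your outline is in the right neighbourhood but contains a real gap you yourself flag: the iterative refinement you describe costs $\Theta(\log\log n)$ depth, and you do not give a concrete mechanism for collapsing it to $O(1)$. Saying that one should ``analyse a single bounded-round hashing process carried out in parallel'' and appeal to ``constant-depth expander/comparator gadgets'' is naming the difficulty, not resolving it; Ajtai's actual argument (and the simpler depth-$3$ non-uniform argument the paper alludes to) uses random sampling plus approximate majority in $\mathrm{AC}^0$, and the delicate point is precisely that approximate majority with $(1\pm\epsilon)$ thresholds is in $\mathrm{AC}^0$ for every fixed $\epsilon>0$. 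If you want to complete the proof, that is the lemma you must either prove or cite. For the purposes of this paper, however, you could simply invoke the non-uniform depth-$3$ version --- your first paragraph already essentially gives that, since the hashing gadget plus a union bound over inputs and a good choice of the random hashes yields a non-uniform circuit --- and note, as the paper does, that uniformity is not needed downstream.
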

 We do not need the full strength (namely, the uniformity of $G_n$) of this theorem; the required level of accuracy (just $\epsilon > 0$) can be achieved by non-uniform polynomial-size circuits of depth $d=3$ (with a much easier proof, see~\cite{vio:t:approxcount}).

\subsection{Pseudo-random generator fooling bounded-size constant-depth circuits}
The first step in the derandomization argument in the proof of Theorem~\ref{t:main} is done using the Nisan-Wigderson pseudo-random generator that ``fools" constant-depth circuits~\cite{nis-wig:j:hard}. 
\begin{theorem}[Nisan-Wigderson pseudo random generator]
\label{t:NWgen}
For every constant $d$ there exists a constant $\alpha > 0$ with the following property. There exists a function $\mbox{NW-gen}:\zo^{O(\log^{2d+6}n)} \mapping \zo^n$ such that for any circuit $G$ of size $2^{n^\alpha}$ and depth $d$,
\[
| \prob_{s \in \zo^{O(\log^{2d+6}n)}}[G(\mbox{NW-gen}(s)) = 1] - \prob_{z \in \zo^n} [G(z)=1]| < 1/100.
\]
Moreover, there is a procedure that on inputs $(n, i,s)$ produces the $i$-th bit of $\mbox{NW-gen}(s)$ in time
$\poly(\log n)$.

\end{theorem}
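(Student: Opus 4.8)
The plan is to construct $\mbox{NW-gen}$ by the Nisan--Wigderson recipe, using parity as the hard function, since parity is not only hard but hard to \emph{approximate} for bounded-depth circuits: by H{\aa}stad's correlation bound (obtained by iterating the switching lemma) there is a constant $c=c(d)$ such that every circuit of depth $d'$ and size $s$ agrees with $\mbox{PAR}_m$, parity on $m$ bits, on at most a $1/2 + s\cdot 2^{-m^{1/(d'-1)}/c}$ fraction of its inputs. I would take $m=\Theta(\log^{d+3}n)$, pinning down the exact constants and the exact value $d'=d+O(1)$ only at the end.

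Next I would fix an NW combinatorial design: sets $S_1,\dots,S_n\subseteq[\ell]$ with $|S_i|=m$ for every $i$ and $|S_i\cap S_j|\le\log n$ whenever $i\ne j$, which exists with $\ell=O(m^2)=O(\log^{2d+6}n)$ via the standard polynomial-graph construction over a field of size $\Theta(m)$; there each $S_i$ is named by a univariate polynomial of degree $O(\log n/\log m)$ and can be listed in time $\poly(\ell)=\poly(\log n)$. Define $\mbox{NW-gen}(s)=\bigl(\mbox{PAR}_m(s|_{S_1}),\dots,\mbox{PAR}_m(s|_{S_n})\bigr)$ for $s\in\zo^\ell$. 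This settles the ``Moreover'' clause at once: the $i$-th output bit is the parity of the $m=\poly(\log n)$ coordinates of $s$ picked out by $S_i$, and $S_i$ itself is produced in time $\poly(\log n)$.

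For the pseudorandomness property I would run the standard hybrid argument. Suppose some depth-$d$, size-$2^{n^\alpha}$ circuit $G$ distinguishes $\mbox{NW-gen}(s)$ from a uniformly random $n$-bit string with advantage exceeding $1/100$. Yao's distinguisher-to-predictor conversion, summed over the $n$ hybrids, yields a coordinate $i$ and a depth-$d$ circuit $P$ of size $2^{n^\alpha}+O(n)$ predicting the $i$-th output bit from the first $i-1$ with advantage more than $1/(100n)$. Fixing the seed coordinates outside $S_i$ to an advantage-preserving value, each of the first $i-1$ output bits becomes a Boolean function of only the $\le\log n$ free coordinates lying in $S_i\cap S_j$ --- hence computable by a DNF of size at most $2^{\log n}=n$ and depth $2$ --- while the $i$-th bit is exactly $\mbox{PAR}_m$ of the $m$ coordinates indexed by $S_i$, all now free. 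Substituting these depth-$2$ circuits into $P$ gives a circuit of depth $d+O(1)$ and size $2^{O(n^\alpha)}$ computing $\mbox{PAR}_m$ with advantage more than $1/(100n)$, which for a suitable $m$ and a small enough $\alpha=\alpha(d)$ contradicts the correlation bound of the first paragraph; replacing $1/100$ by any other fixed constant changes nothing.

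The step I expect to be the real obstacle is this last one, together with the precise quantitative inputs: the arity $m$ of the parity function, the intersection bound of the design, the blow-up $d+O(1)$ in depth, and the exponent $\alpha$ all have to be chosen simultaneously so that the circuit produced in the analysis is still small enough for the average-case (not merely worst-case) form of H{\aa}stad's bound to apply, while the universe size $\ell=O(m^2)$ of the design --- hence the seed length --- stays at $O(\log^{2d+6}n)$. All of the arithmetic hiding behind the exponent $2d+6$ is exactly this trade-off; everything else is assembling standard pieces --- H{\aa}stad's lower bound, NW designs, and the hybrid argument.
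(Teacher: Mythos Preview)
The paper does not prove this theorem at all: it is stated in the Preliminaries (Section~2.4) as a known result from Nisan and Wigderson~\cite{nis-wig:j:hard}, with no accompanying proof. So there is nothing to compare your argument against on the paper's side.

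That said, your sketch is the standard proof of this result. Instantiating the Nisan--Wigderson generator with $\mbox{PAR}_m$ as the hard function, using a design with pairwise intersections $\le \log n$ over a universe of size $O(m^2)$, and then running the hybrid/predictor argument together with H{\aa}stad's correlation bound is exactly how one proves pseudorandomness against constant-depth circuits with polylogarithmic seed length. Your identification of the ``real obstacle'' is also accurate: the only nontrivial bookkeeping is matching the parameters so that the depth-$(d+O(1))$, size-$2^{O(n^\alpha)}$ predictor circuit still violates the average-case parity lower bound, and this is precisely what fixes $m=\Theta(\log^{d+3} n)$ and hence the seed length $\ell=O(m^2)=O(\log^{2d+6} n)$. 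There is no gap in the approach.
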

\subsection{Hardness assumptions and pseudo-random generators}
The second step in the derandomization argument in the proof of Theorem~\ref{t:main} uses  pseudo-random generators based on the assumption that hard functions exist in $\e = \cup_c {\rm DTIME}[2^{cn}]$. Such pseudo-random generators were
constructed by Nisan and Wigderson~\cite{nis-wig:j:hard}. Impagliazzo and Wigderson~\cite{imp-wig:c:pbpp} strenghten the results in~\cite{nis-wig:j:hard} by weakening the hardness assumptions. Klivans and van Melkebeek~\cite{km:j:prgenoracle} show that the Impagliazzo-Wigderson results hold in relativized worlds. We use in this paper some instantiations of a general result in~\cite{km:j:prgenoracle}.

We need the following definition. For a function $f: \zo^* \mapping \zo$ and an oracle $A$, the circuit complexity $C_f^A(n)$ of $f$ at length $n$ relative to $A$ is the smallest integer $t$ such that there exists an $A$ oracle circuit of size $t$ that computes $f$ on inputs of length $n$. 

We use the following hardness assumptions.

{\bf Assumption H1:}
\smallskip

 There exists $f \in \e$ such that for some $\epsilon > 0$ and for some PSPACE complete set $A$, $C_f^{A}(n) \geq 2^{\epsilon \cdot n}$.
 \medskip
 
{\bf Assumption H2:}
\smallskip

 There exists $f \in \e$ such that for some $\epsilon > 0$ and for some $\Sigma_3^p$ complete set $A$, $C_f^A(n) \geq 2^{\epsilon \cdot n}$.
 \smallskip
 
 If H1 holds, then for some oracle $A$ that is PSPACE complete, for every $k$, there exists $H: \zo^{c \log n} \mapping \zo^n$ computable in time $\poly(n)$ such that for every oracle circuit $C$ of size $n^k$, 
 \[
 | \prob_{\sigma \in \zo^{c \log n}}[C^A(H(\sigma))=1] - \prob_{s \in \zo^n}[C^A(s) = 1]| < o(1).
 \]
 If H2 holds, then for some oracle $A$ that is $\Sigma_3^p$ complete, for every $k$, there exists $H: \zo^{c \log n} \mapping \zo^n$ computable in time $\poly(n)$ such that for every oracle circuit $C$ of size $n^k$, 
 \[
 | \prob_{\sigma \in \zo^{c \log n}}[C^A(H(\sigma))=1] - \prob_{s \in \zo^n}[C^A(s) = 1]| < o(1).
 \]
 
\section{Main Result}

\begin{theorem}
\label{t:maintechnical}
Assuming H1, the following holds: For every set $B$ in PSPACE, there exists a polynomial $p$ such that for every length $n$, and for every  string $x \in \bn$, there exists a string $z$ with the following properties:

(1) $|z| = \lceil \log (|\bn|) \rceil$,

(2) $C^p(z \mid x) = O(\log n)$,

(3) $\cd^{p, \bn} (x \mid z) = O(\log n)$.
\end{theorem}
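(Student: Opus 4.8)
The plan follows closely the roadmap sketched in Section~\ref{s:technique}. Fix $B \in \pspace$ and a length $n$; write $N = 2^n$ and $k = \lceil \log(|\bn|)\rceil$. The goal is to produce, from an advice string $\sigma$ of length $O(\log n)$, a function $E : \zo^n \times \zo^{\log n} \mapping \zo^k$ that is computable in polynomial time given $\sigma$, and such that every $z \in \zo^k$ has at most $8n$ preimages in $\bn \times \zo^{\log n}$. Granting such an $E$, we set $z = E(x, 0^{\log n})$; this already satisfies (1), and (2) holds because $z$ is computed from $x$ in polynomial time given the $O(\log n)$-bit advice $\sigma$ together with $n$ (so $C^p(z \mid x) = |\sigma| + O(\log n) = O(\log n)$). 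For (3), note that $x$ must be distinguished among the elements of $\bn$ only; since $B \in \pspace$, membership in $\bn$ is checkable in polynomial time with oracle $\bn$, and the distinguisher first tests $y \in \bn$ and $E(y, 0^{\log n}) = z$ — this narrows the field to at most $8n$ strings — and then applies Theorem~\ref{t:bfl} relativized to this polynomial-size set, costing an extra $2\log(8n) + O(\log n) = O(\log n)$ bits. Combining, $\cd^{p,\bn}(x \mid z) = O(\log n)$.

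\textbf{Construction of $E$.} Think of a candidate $E$ as its table, a string of length $M = 2^n \cdot n \cdot k = \poly(N)$ bits. A uniformly random table has, with high probability (by a Chernoff/union bound over the $2^k \le 2^{n}$ buckets and the $|\bn| \cdot n \le N \cdot n$ items), at most $7n$ preimages per $z$. By Theorem~\ref{t:ajtai} (indeed its depth-$3$ non-uniform version suffices), for each fixed $z$ the event ``$z$ has $\le 7n$ preimages'' is sandwiched between two constant-depth, $\poly(N)$-size circuit predicates that agree except in the ambiguous zone between $7n$ and $8n$; here one must be slightly careful to feed Ajtai's circuit the right bits — the indicator bits ``$(x,w) \mapsto [\,x \in \bn \text{ and } E(x,w) = z\,]$'' are themselves computed by fixed $\poly(N)$-size circuits from the table (since $B \in \pspace \subseteq$ uniform circuits of size $2^{O(n)}$, hence $\poly(N)$), so the composition stays constant-depth and $\poly(N)$-size. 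Taking the AND over all $z$ yields a single circuit $G$ of $\poly(N)$ size and constant depth with
\[
\{E : \text{every } z \text{ has} \le 7n \text{ preimages}\} \subseteq \{E : G(E) = 1\} \subseteq \{E : \text{every } z \text{ has} \le 8n \text{ preimages}\}.
\]
Since the left set has density close to $1$, so does $\{E : G(E)=1\}$. Now apply Theorem~\ref{t:NWgen} with $d = $ depth of $G$ (so $\poly(N) = 2^{\poly(n)} \le 2^{M^\alpha}$ once we pad appropriately — note $G$ has $M$ inputs and size $\poly(M)$, well below $2^{M^\alpha}$): for at least a $1 - 1/100 - o(1)$ fraction of seeds $s \in \zo^{\poly(\log M)} = \zo^{\poly(n)}$, the string $\mbox{NW-gen}(s)$ is the table of an $E$ with $\le 8n$ preimages per bucket. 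Call such $s$ \emph{good}.

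\textbf{Shrinking the seed.} The seed $s$ still has length $\poly(n)$, not $O(\log n)$. The key observation is that the predicate ``$s$ is good'' — equivalently, ``the function $E_s$ whose table is $\mbox{NW-gen}(s)$ has at most $8n$ preimages for every $z$'' — is decidable in $\pspace$: given $s$, one can, for each $z \in \zo^k$, cycle through all $(x,w) \in \bn \times \zo^{\log n}$, using polynomial space to recompute individual bits of $\mbox{NW-gen}(s)$ (Theorem~\ref{t:NWgen}, $\poly(\log n)$ time per bit) and to test $x \in \bn$ (as $B \in \pspace$), counting preimages in a polynomial-space counter. Hence the set of good seeds is a $\pspace$ (equivalently, under H1, polynomially-decidable with a PSPACE-complete oracle $A$) subset of $\zo^{\poly(n)}$ of density $\ge 1 - 1/50$. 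Let $C$ be the $\poly$-size $A$-oracle circuit testing goodness of a seed of length $|s|$. Under H1, Klivans–van Melkebeek give a generator $H : \zo^{c\log|s|} \mapping \zo^{|s|}$, computable in $\poly(|s|)$ time, that fools all size-$|s|^{O(1)}$ $A$-oracle circuits up to $o(1)$; since $|s| = \poly(n)$ we have $c\log|s| = O(\log n)$. Therefore $\prob_\sigma[C^A(H(\sigma)) = 1] \ge \prob_s[C^A(s)=1] - o(1) > 0$, so some seed $\sigma \in \zo^{O(\log n)}$ has $H(\sigma)$ good. Fix such a $\sigma$ as the advice; define $E$ to be the function with table $\mbox{NW-gen}(H(\sigma))$. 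Given $\sigma$, any bit of the table is computable in $\poly(n)$ time (compose $H$ and $\mbox{NW-gen}$), so $E(x,w)$ is polynomial-time computable, and by construction every $z$ has $\le 8n$ preimages in $\bn \times \zo^{\log n}$. This is exactly the $E$ required above, completing the proof.

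\textbf{Expected main obstacle.} The routine parts (the probabilistic bound, reading off (1)–(3) from $E$) are easy; the delicate point is the \emph{plumbing of the parameters across the two derandomization layers}. One must verify that Ajtai's approximation error can be taken to be a fixed constant (it can, $\epsilon$ merely needs to be positive) so that $G$ has genuinely constant depth independent of $n$, that the indicator predicates folded into $G$ really are $\poly(N)$-size constant-depth (using $B \in \pspace$, and being careful that $k$ and $\log n$ are small), that $\mbox{NW-gen}$'s hardness-of-fooling threshold $2^{M^\alpha}$ comfortably exceeds the size of $G$, and — most importantly — that ``goodness of a seed'' lands in $\pspace$ rather than some larger class, which is what lets H1 (with PSPACE gates) suffice; this last point is exactly where the $\pspace$ hypothesis on $B$ and the PSPACE-gate formulation of H1 are both used.
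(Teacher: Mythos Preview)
Your proposal is correct and follows essentially the same two-step derandomization as the paper: random $E$ is balanced by Chernoff, Ajtai's approximate counting yields a constant-depth $\poly(N)$-size test $G$, $\mbox{NW-gen}$ fools $G$ to shrink the description to $\poly(n)$ bits, goodness of an $\mbox{NW-gen}$ seed is checked in PSPACE, and Klivans--van~Melkebeek under H1 shrinks further to $O(\log n)$. The only cosmetic differences are that the paper's distinguisher checks $\exists v\,E(y,v)=z$ rather than your $E(y,0^{\log n})=z$, and the paper simply hardwires $\bn$ into the Ajtai circuit rather than invoking ``$\pspace \subseteq$ size-$2^{O(n)}$ circuits''; neither affects correctness or the overall structure.
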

Before proving the theorem, we note that (1) and (3) immediately imply the following theorem, which is our main result.
\begin{theorem}
\label{t:main}
Assumming H1, the following holds: For every set $B$ in PSPACE there exists a polynomial $p$, such that  for every length $n$, and for every  string $x \in \bn$,
\[
\cd^{p, \bn} (x) \leq \log(|\bn|) + O(\log n).
\]
\end{theorem}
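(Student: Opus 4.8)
The plan is to obtain Theorem~\ref{t:main} as a corollary of Theorem~\ref{t:maintechnical}: given properties (1) and (3), one distinguishes $x$ unconditionally by hard-wiring a self-delimiting copy of $z$ into the conditional distinguisher, which costs $|z| + O(\log|z|) = \lceil\log(|\bn|)\rceil + O(\log n)$ extra bits. So I describe a proof of Theorem~\ref{t:maintechnical}. Write $k = \lceil\log(|\bn|)\rceil$. The crux is to build a function $E\colon \zo^n\times\zo^{\log n}\to\zo^k$ and an advice string $\sigma$ with $|\sigma| = O(\log n)$ such that (a) $E$ is computable in time $\poly(n)$ given $\sigma$, and (b) every $z\in\zo^k$ has at most $O(n)$ preimages under $E$ in $\bn\times\zo^{\log n}$. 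Granting this, set $z := E(x,0^{\log n})$. Property~(1) is immediate; property~(2) holds because $z$ is obtained from $x$ in polynomial time using only the $O(\log n)$ bits of $\sigma$ (plus $k$); and for property~(3) I apply Theorem~\ref{t:bfl} to the set $S_z := \{y\in\bn : E(y,0^{\log n}) = z\}$, which by~(b) has size $O(n)$, so it furnishes a string $p_x$ of length $2\log|S_z| + O(\log n) = O(\log n)$. The conditional distinguisher for $x$ then hard-wires $\sigma$ and $p_x$, tests membership in $S_z$ using the $\bn$-oracle together with the polynomial-time evaluation of $E$, and runs the algorithm of Theorem~\ref{t:bfl}; its length is $O(\log n)$.

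To build $E$, I first invoke the probabilistic method: for a uniformly random $E$ each fixed $z$ has expected preimage count $|\bn|\cdot n/2^k \le n$, so (choosing the constant generously) a Chernoff bound and a union bound over the $\le 2^n$ choices of $z$ show that with probability $1 - 2^{-\Omega(n)}$ every $z$ has at most $7n$ preimages. This is not directly usable since the truth table of $E$ has length $m = 2^n\cdot n\cdot k = 2^{n + O(\log n)}$. To compress it, I use Ajtai's theorem (Theorem~\ref{t:ajtai}) to assemble a single circuit $G$ of size $\poly(m)$ and constant depth that, reading the truth table of $E$, outputs $1$ whenever every $z$ has $\le 7n$ preimages and outputs $0$ whenever some $z$ has $\ge 8n$ preimages: for each $z$ the indicator of ``$(y,w)$ is a preimage of $z$'' is a constant-depth function of the table (the bit ``$y\in\bn$'' being hard-wired), $G$ applies Ajtai's counting circuit to it with threshold $\approx 7.5n$ and precision $\epsilon = 1/15$, and then takes the AND over all $z$. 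Feeding $G$ to the Nisan--Wigderson generator against constant-depth circuits (Theorem~\ref{t:NWgen}) shows that for a $0.98$ fraction of seeds $s$ of length $\poly\log(m) = \poly(n)$ we have $G(\mbox{NW-gen}(s)) = 1$, hence $\mbox{NW-gen}(s)$ is the truth table of a function in which every $z$ has $\le 8n$ preimages; call such an $s$ \emph{good}.

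The second derandomization step shrinks the seed from $\poly(n)$ to $O(\log n)$. The key observation is that ``$s$ is good'' is decidable in PSPACE: cycle through all $z$ and all $(y,w)$, decide $y\in\bn$ (possible in polynomial space because $B\in{\rm PSPACE}$), compute the needed bits of $\mbox{NW-gen}(s)$ in polynomial time, count exactly, and compare to $8n$. Hence some polynomial-size oracle circuit $C$ with a PSPACE-complete oracle $A$ computes this predicate, and $\prob_s[C^A(s) = 1] > 0.98$. Applying the generator $H$ supplied by Assumption~H1 (with the size parameter chosen large enough for $C$), which fools $A$-oracle circuits and has seed length $O(\log|s|) = O(\log n)$, gives $\prob_\sigma[C^A(H(\sigma)) = 1] > 0.98 - o(1) > 1/2$, so some $\sigma^\ast$ of length $O(\log n)$ makes $H(\sigma^\ast)$ good. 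Take $E$ to be the function whose truth table is $\mbox{NW-gen}(H(\sigma^\ast))$ and $\sigma := \sigma^\ast$: given $\sigma$, computing $H(\sigma^\ast)$, then the relevant bits of $\mbox{NW-gen}$, then the table entry for $(y,w)$, all run in $\poly(n)$, and by construction every $z$ has $\le 8n$ preimages. This is exactly the $E$ needed above.

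The step I expect to be most delicate is this second derandomization: one must pin down that ``$s$ is good for $\mbox{NW-gen}$'' genuinely lies in PSPACE under the stated input/output conventions --- in particular that the \emph{exact} preimage count (not merely Ajtai's approximation, which was only a device for the first step) is computable in polynomial space, and that a table entry of $E$ is recovered from the seed in polynomial time --- so that the PSPACE-oracle generator of Assumption~H1 applies; this is precisely where the ``PSPACE gates'' in H1 are used, since deciding $\bn$ and counting preimages need not be doable in polynomial time. A secondary, purely bookkeeping point is to track the additive $O(\log n)$ terms cleanly through the composition of Theorem~\ref{t:bfl}, the advice $\sigma$, and the self-delimiting encoding of $z$.
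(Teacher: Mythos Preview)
Your proposal is correct and follows essentially the same route as the paper: derive Theorem~\ref{t:main} from properties~(1) and~(3) of Theorem~\ref{t:maintechnical}, and prove the latter by (i) a probabilistic argument that a random $E$ is $7$-balanced, (ii) Ajtai's approximate counting to get a constant-depth $\poly(N)$-size test circuit, (iii) the Nisan--Wigderson generator to shrink the description of $E$ to a seed of length $\poly(n)$, and (iv) the observation that ``$\mbox{NW-gen}(s)$ is $8$-balanced'' is in PSPACE, so the Klivans--van~Melkebeek/Impagliazzo--Wigderson generator under H1 shrinks the seed to $O(\log n)$. The only cosmetic differences are that the paper uses the slightly larger set $U=\{y\in\bn:\exists v\,E(y,v)=z\}$ in place of your $S_z$, and tracks the constants $0.99/0.9/0.8$ where you use $0.98$; neither affects the argument.
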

\begin{proof} (of Theorem~\ref{t:maintechnical}) 
Let us fix a set $B$ in PSPACE and let us focus on $\bn$, the set of strings of length $n$ in $B$. Let $k = \lceil \log |\bn| \rceil$ and let $K = 2^k$. Also, let $N = 2^n$.
 \begin{definition}
 \label{d:balanced}
 Let $E : \zo^n \times \zo^{\log n} \mapping \zo^k$. We say that $E$ is $\Delta$-balanced  if for every $z \in \zo^m$,
 \[
 \big | E^{-1}(z) ~\cap~ \bn \times \zo^{\log n} \big| \leq \Delta \cdot \frac{|\bn| \cdot n}{K}.
 \]
 \end{definition}
 
 The plan for the proof is as follows.  Suppose that we have a function $E: \zo^n \times \zo^{\log n} \mapping \zo^k$ that is $\Delta$-balanced, for some constant $\Delta$.
 
 Furthermore assume that $E$ can be ``described'' by a string $\sigma$, in the sense that given $\sigma$ as an advice string,  $E$ is computable in time polynomial in $n$. 
 
 Fix $x$ in $\bn$. and let $z = E(x, 0^{\log n})$. Clearly, the string $z$ satisfies requirement (1). It remains to show (2) and (3).
 
 Consider the set
 \[
 U=\{u \in \bn \mid \exists v \in \zo^{\log n}, E(u,v) = z\}.
 \]
 
 Since $E$ is $\Delta$-balanced, the size of $U$ is bounded by $\Delta \cdot \frac{|\bn|\cdot n}{K} \leq \Delta n$.
 
 Now observe that for some polynomial $p$,
 \[
 \cd^{p, \bn}(x \mid z) \leq |\sigma| + 2 \log(\Delta n) + O(\log n) +  \mbox{self-delimitation overhead}.
 \]
 Indeed, the following is a polynomial-time algorithm using oracle $\bn$ that distinguishes $x$ (it uses an algorithm $A$, promised by Theorem~\ref{t:bfl},
 that distinguishes $x$ from the other strings in $U$, using a string $p_x$ of length $2 \log(|U^{=n}|) + O(\log n) \leq 2 \log (\Delta n) + O(\log n)$).
 \smallskip

 \fbox{\vbox{
 Input: $y$; the strings $z, \sigma, p_x$, defined above, are also given.
 
 If $y \not \in \bn$, then reject.
 
 If for all $v \in \zo^{\log n}$, $E(y, v) \not= z$, then reject.
 
 If $A(y, p_x) = $ reject, then reject.
 
 Else accept.
 }}
 \smallskip

Clearly, the algorithm accepts input $y$ iff $y = x$.  Also, since $z = E(x, 0^{\log n})$, $C^p(z \mid x) \leq |\sigma| + O(1)$. For a further application (Theorem~\ref{t:cdnptechnical}), note that the above algorithm queries the oracle $\bn$ a single time.
 
 Therefore, if we manage to achieve $\sigma = O(\log n)$, we obtain that $\cd^{p, \bn}(x \mid z ) \leq O(\log n)$ and 
 $C^p(z \mid x) \leq O(\log n)$.
 
 Thus our goal is to produce a function $E: \zo^n \times \zo^{\log n} \mapping \zo^k$ that using an advice string $\sigma$ of length $O(\log n)$ is computable in polynomial time and is $\Delta$-balanced for some constant $\Delta$.
 Let us call this goal $(*)$.
 
 We implement the ideas exposed in Section~\ref{s:technique} and the reader may find convenient to review that discussion.
 
 \begin{myclaim}.
 \label{c:probmethod}
With probability at least $0.99$, a random function $E: \zo^n \times \zo^{\log n} \mapping \zo^k$ is $7$-balanced.
 
 \end{myclaim}
 \begin{proof}
 For fixed $x \in B$, $y \in \zo^{\log n}$, $z \in \zo^m$, if we take a random function $E : \zo^n \times \zo^{\log n} \mapping \zo^k$, we have that $\prob[E(x,y) = z] = 1/K$.
 Thus the expected number of preimages of $z$ in the rectangle $B \times \zo^{\log n}$ is $\mu = (1/K) \cdot |B| \cdot n$.
 By the Chernoff's bounds,\footnote{We use the following version of the Chernoff bound. If $X$ is a sum of independent Bernoulli random variables, and the expected value $E[X]= \mu$, then
$\prob[X \geq (1+\Delta)\mu] \leq e^{-\Delta (\ln (\Delta/3)) \mu}$. The standard Chernoff inequality $\prob(X \geq (1+\Delta) \mu] \leq \big( \frac{e^\Delta}{(1+\Delta)^{(1+\Delta)}}\big)^\mu$ is presented in many textbooks. It can be checked easily that $\frac{e^\Delta}{(1+\Delta)^{(1+\Delta)}} < e^{-\Delta \ln (\Delta/3)}$, which implies the above form.}
 \[
 \prob[\mbox{number of preimages of $z$ in $B \times \zo^{\log n}$} > 7  \mu] < e^{-(6 \ln 2) \mu}.
 \]
 Therefore, the probability of the event ``there is some $z \in \zo^k$ such that the number of $z$-cells in $B \times \zo^{\log n}$ is $ > 7 \mu$'' is at most $K \cdot e^{-(6 \ln 2) \mu} < 0.01$.
\qed \end{proof}
\begin{myclaim}.
\label{c:circuit}
There exists a circuit $G$ of size $\poly(N)$ and constant depth such that for any function $E: \zo^n \times \zo^{\log n} \mapping \zo^k$ (whose table is given to $E$ as the input),
\begin{itemize}
	\item [(a)] If $G(E)=1$, then $E$ is $8$-balanced,
	\item [(b)] If $E$ is $7$-balanced, then $G(E)=1$.
\end{itemize}

\end{myclaim}
 \begin{proof}
By Theorem~\ref{t:ajtai}, there is a constant-depth, $\poly(N)$ size circuit that counts in an approximate sense the occurrences of a string $z$ in $\zo^k$ in the rectangle $B \times \zo^{\log n}$. If we make a copy of this circuit for each $z \in \zo^k$ and link all these copies to an AND gate we obtain the desired circuit $G$.

More precisely, let $x_z$ be the binary string of length $|B| \cdot n$, whose bits are indexed as $(u,v)$ for $u \in B, v \in \zo^{\log n}$, and whose $(u,v)$-bit is $1$ iff $E(u,v)=z$. By Theorem~\ref{t:ajtai}, there is a constant-depth, $\poly(N)$ size circuit $G'$ that behaves as follows:
\begin{itemize}
	\item [$\bullet$ ] $G'(x_z) = 1$ if the number of $1$'s in $x_z$ is $\leq 7 \frac{|B|\cdot n}{K}$,
	\item [$\bullet$ ] $G'(x_z) = 0$ if the number of $1$'s in $x_z$ is $> 8 \frac{|B|\cdot n}{K}$,
	\end{itemize}
 If the number of $1$'s is between the two bounds, the circuit $G'$ outputs either $0$ or $1$.
 
 The circuit $G$ on input a table of $E$, will first build the string $x_z$ for each $z \in \zo^k$, then has a copy of $G'$ for each $z$, with the $z$-th copy running on $x_z$ and then connects the outputs of all the copies to an AND gate, which is the output gate of $G$. 
 
 \qed \end{proof}
 \begin{myclaim}.
 \label{c:ge}
 If we pick at random a function $E: \zo^n \times \zo^{\log n} \mapping \zo^k$, with probability at least $0.99$, $G(E) = 1$.
 \end{myclaim}
 \begin{proof}
 This follows from Claim~\ref{c:probmethod} and from Claim~\ref{c:circuit} (b).
 \qed \end{proof}
 
 Let $\tilde{N} = N \cdot n \cdot k$. Let $d$ be the depth of the circuit $G$. We denote $\tilde{n} = log^{2d+6}\tilde{N}$. Note that $\tilde{n} = \poly(n)$. We consider the Nisan-Wigderson pseudo-random generator for depth $d$ given by Theorem~\ref{t:NWgen}. Thus,
 \[
 \mbox{NW-gen}: \zo^{\tilde{n}} \mapping \zo^{\tilde{N}}.
 \]
 For any string $s$ of length $\tilde{n}$, we view
 $\mbox{NW-gen}(s)$ as the table of a function $E : \zo^n \times \zo^{\log n} \mapping \zo^k$.
 \begin{myclaim}.
 If we pick at random $s \in \tilde{n}$, with probability of $s$ at least $0.9$, it holds that 
 $\mbox{NW-gen}(s) \mbox{ is $8$-balanced}$.
 \end{myclaim}
 \begin{proof}
 Since $G$ is a circuit of constant depth and polynomial size, by Theorem~\ref{t:NWgen}, the probability of the event
 ``$G(\mbox{NW-gen}(s)) = 1$'' is within $0.01$ of the probability of the event ``$G(E)=1$,'' and the second probability is at least $0.99$ by Claim~\ref{c:ge}. Thus the first probability is greater than $0.9$. Taking into account that if 
 $G(\mbox{NW-gen}(s)) = 1$ then $\mbox{NW-gen}(s)$ is $8$-balanced, the conclusion follows.
 \qed \end{proof}
 
 \begin{myclaim}.
 Let $T = \{s \in \zo^{\tilde{n}} \mid \mbox{NW-gen}(s) \mbox{ is $8$-balanced} \}$. Then $T$ is in PSPACE.
  \end{myclaim}
 \begin{proof}
 We need to count for every $z \in \zo^k$, the number of $z$-cells in the rectangle $\bn \times \zo^{\log n}$ of the table of $\mbox{NW-gen}(s)$. Since $B$ is in PSPACE and since each entry in the table of $\mbox{NW-gen}(s)$ can be computed in time polynomial in $\tilde{n}$, it follows that the above operation can be done in PSPACE.
 \qed \end{proof}
 \begin{myclaim}.
 \label{f:secondprg}
 Assume H1. There exists a constant $c$ and a function $H : \zo^{c \log \tilde{n}} \mapping \zo^{\tilde{n}}$, computable in time $\poly(\tilde{n}) = \poly(n)$, such that if $\sigma$ is a string chosen at random in $\zo^{c \log \tilde{n}}$, with probability at least $0.8$, it holds that $\mbox{NW-gen}(H(\sigma)))$ is $8$-balanced.
 \end{myclaim}
 \begin{proof}
 Under assumption H1, there exists a pseudo-random generator $H: \zo^{c \log \tilde{n}} \mapping \zo^{\tilde{n}}$ such that for any set $A$ in PSPACE,
 \[
 |\prob_{\sigma \in \zo^{c \log \tilde{n}}}[H(\sigma) \in A] - \prob_{s \in \zo^{\tilde{n}}}[ s \in A]| < 0.1.
 \]
 Since the set $T$ is in PSPACE, 
 $\prob_{\sigma \in \zo^{c \log \tilde{n}}}[\mbox{NW-gen}(H(\sigma)) \mbox{ is  $8$-balanced}]$
is within $0.1$ from 
$\prob_{s \in \zo^{\tilde{n}}}[ \mbox{NW-gen}(s) \mbox{ is  $8$-balanced}].$
 Since the latter probability is at least $0.9$, the conclusion follows.
 
 \qed \end{proof}
 \smallskip
 
 We can now finish the proof of Theorem~\ref{t:main}.
 
 Fix $\sigma \in \zo^{c \log \tilde{n}}$ such that $\mbox{NW-gen}(H(\sigma)))$ is $8$-balanced.  Note that $|\sigma| = O(\log n)$. Given $\sigma$, each entry in the table of $\mbox{NW-gen}((H(\sigma)))$ can be computed in time $\poly(n)$. Thus the function $E: \zo^n \times \zo^{\log n} \mapping \zo^k$, whose table is $\mbox{NW-gen}((H(\sigma)))$, satisfies the goal $(*)$.
 \qed \end{proof}
 
 \section{Variations around the main result}
 
 We analyze here the polynomial-time bounded distinguishing Kolmogorov of strings in a set $B$ that is in $\p$ or in $\np$. We start with the case $B \in \p$. The following is the analog of Theorem~\ref{t:maintechnical} and its main point is that assumption H1 can be replaced by the presumably weaker assumption H2.
 \begin{theorem}
\label{t:cdptechnical}
Assuming H2, the following holds:  For every set $B$ in $\p$, there exists a polynomial $p$ such that, for every length $n$, and for every  string $x \in \bn$, there exists a string $z$ with the following properties:

(1) $|z| = \lceil \log (|\bn|) \rceil$,

(2) $C^p(z \mid x) = O(\log n)$,

(3) $\cd^{p} (x \mid z) = O(\log n)$.
\end{theorem}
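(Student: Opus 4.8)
{\bf Proof proposal.}
The plan is to re-run the proof of Theorem~\ref{t:maintechnical} with essentially no change: the only place that genuinely used ``$B\in\pspace$'' was in bounding the complexity of the set of good seeds for $\mbox{NW-gen}$, and when $B$ is merely in $\p$ that set drops well inside the polynomial hierarchy, so H2 rather than H1 suffices to supply the second pseudo-random generator. As before, everything reduces to producing a function $E:\zo^n\times\zo^{\log n}\mapping\zo^k$, with $k=\lceil\log|\bn|\rceil$, that is $\Delta$-balanced for a constant $\Delta$ and computable in polynomial time from an advice string $\sigma$ of length $O(\log n)$. Given such an $E$, putting $z=E(x,0^{\log n})$ yields $|z|=k$ (property~(1)) and $C^p(z\mid x)\le|\sigma|+O(1)=O(\log n)$ (property~(2)); and the distinguisher from the proof of Theorem~\ref{t:maintechnical} --- reject $y$ unless $y\in B$, then reject unless some $v$ satisfies $E(y,v)=z$, then run the algorithm of Theorem~\ref{t:bfl} on the at most $\Delta n$ strings of $U=\{u\in\bn:\exists v,\ E(u,v)=z\}$ --- witnesses $\cd^{p}(x\mid z)=O(\log n)$ (property~(3)). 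The new point is that no oracle is needed: since $B\in\p$, the test ``$y\in B$'' is carried out directly in polynomial time, and the algorithm of Theorem~\ref{t:bfl} is oracle-free, so the bound is on $\cd^{p}$ with no $\bn$ superscript, as (3) requires.

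Building $E$, I would follow Claims~\ref{c:probmethod}--\ref{c:ge} and the Nisan--Wigderson step verbatim. Since $K=2^{\lceil\log|\bn|\rceil}$ satisfies $|\bn|\le K<2|\bn|$, the expected number $\mu=|\bn|n/K$ of preimages of a fixed $z$ under a random $E$ lies in $(n/2,n]$, so it is convenient to read ``$\Delta$-balanced'' simply as ``every $z\in\zo^k$ has at most $\Delta n$ preimages in $\bn\times\zo^{\log n}$''. With this reading Claim~\ref{c:probmethod} still holds (union bound over the at most $2^{n+1}$ values of $z$, each failing with probability $e^{-\Theta(n)}$), Ajtai's approximate counter (Theorem~\ref{t:ajtai}) applied with a constant accuracy still yields the constant-depth, $\poly(N)$-size separating circuit $G$ of Claim~\ref{c:circuit}, and Theorem~\ref{t:NWgen} still shows that a random seed $s$ of length $\tilde n=\poly(n)$ gives an $8$-balanced $E$ with probability at least $0.9$.

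The heart of the argument, and the step I expect to be the main obstacle, is showing that $T=\{s\in\zo^{\tilde n}:\mbox{NW-gen}(s)\text{ is }8\text{-balanced}\}$ is easy enough for H2 to handle. The observation is that ``$s\notin T$'' merely asserts the existence of some $z\in\zo^k$ and of $8n+1$ distinct pairs $(u_i,v_i)\in\bn\times\zo^{\log n}$ all mapped to $z$ by the table of $\mbox{NW-gen}(s)$. Since $B\in\p$ and the relevant $k\le n$ bits of that table are computable in polynomial time, and since the threshold is only $O(n)$, this witness has polynomial size and is verified in polynomial time; hence ``$s\notin T$'' is in $\np$ and $T\in\conp$ (the length parameter $k$, a fixed attribute of the pair $(B,n)$, being supplied to the test). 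In particular membership in $T$ is decidable by polynomial-size circuits equipped with a $\Sigma_3^p$-complete oracle $A$ --- exactly the regime covered by H2 --- whereas for $B\in\pspace$ one could place $T$ only in $\pspace$, which is why H1 was needed in Theorem~\ref{t:maintechnical}.

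It remains to shrink the seed. By the conclusion of H2 there is $H:\zo^{c\log\tilde n}\mapping\zo^{\tilde n}$, computable in time $\poly(n)$, fooling every polynomial-size $A$-oracle circuit. The composite test ``$\mbox{NW-gen}(H(\sigma))$ is $8$-balanced'' is computed by such a circuit --- evaluate $H(\sigma)$ and the needed table entries in polynomial time, then use one $A$-gate to decide the $\np$ question above --- so the probability over $\sigma$ that $\mbox{NW-gen}(H(\sigma))$ is $8$-balanced is within $o(1)$ of the $0.9$ figure, hence at least $0.8$. Fixing such a $\sigma$, of length $c\log\tilde n=O(\log n)$, the function $E$ whose table is $\mbox{NW-gen}(H(\sigma))$ is computable in polynomial time from $\sigma$ and is $8$-balanced; this is the function demanded in the first paragraph, so (1)--(3) follow. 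The remaining routine items --- the circuit-size comparison between $G$ and what $\mbox{NW-gen}$ fools, and the self-delimiting encoding overhead inside $p_x$ --- are identical to Theorem~\ref{t:maintechnical} and need no change.
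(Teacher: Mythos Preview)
Your proposal is correct and follows the same overall architecture as the paper's proof (rerun Theorem~\ref{t:maintechnical}, drop the oracle since $B\in\p$, and weaken H1 to H2 by arguing that the set of good $\mbox{NW-gen}$ seeds sits low enough in the polynomial hierarchy). The one genuine difference is in how you place the set of good seeds inside $\Sigma_3^p$.

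The paper invokes Sipser's hashing lemma: it builds a $\Sigma_2^p$ predicate $R(s,z)$ that approximately decides whether $z$ has too many preimages, and then quantifies over $z$ to get a set $T'\in\Sigma_3^p$ that sits between the $8$-balanced and the $64$-balanced tables. You instead exploit the fact that the balancedness threshold is only $\Theta(n)$: a witness for ``$s$ is \emph{not} $8$-balanced'' is simply a value $z$ together with $8n+1$ explicit preimages in $\bn\times\zo^{\log n}$, which is a polynomial-size certificate verifiable in polynomial time because $B\in\p$ and each table entry of $\mbox{NW-gen}(s)$ is computable in $\poly(n)$ time. This places the complement in $\np$, hence $T\in\conp$, well inside what $\Sigma_3^p$-oracle circuits can decide. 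Your route is more elementary --- it avoids approximate counting altogether and even shows that a weaker hypothesis (hardness against $\np$-oracle circuits rather than $\Sigma_3^p$-oracle circuits) would already suffice for this theorem. The paper's route, by contrast, is more robust: Sipser's lemma does not rely on the threshold being polynomial and is therefore the argument that carries over to the $B\in\np$ case in Theorem~\ref{t:cdnptechnical}, where checking $u\in B$ introduces an extra existential quantifier that your explicit-witness idea would have to absorb anyway. Your slight redefinition of ``$\Delta$-balanced'' as ``$\le\Delta n$ preimages'' is harmless (it differs from Definition~\ref{d:balanced} by at most a factor of~$2$) and is exactly what makes the threshold independent of $|\bn|$, hence checkable without knowing $|\bn|$ beyond the value of $k$ that is already part of the advice.
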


 \begin{proof}
 We follow the proof of Theorem~\ref{t:maintechnical}. First note that since $B \in \p$, the universal machine does not need oracle access to $B$. We still need to justify that assumption H1 can be replaced by the weaker assumption H2.
 
 Assumption H1 was used in Claim~\ref{f:secondprg}. The point was that the set $T=\{s \mid \mbox{NW-gen}(s) \mbox{ is $8$-balanced}\}$ is in PSPACE and H1 was used to infer the existence of a pseudo-random generator $H$ that fools $T$. If $B \in \p$, we can check that $\mbox{NW-gen}(s)$ is sufficiently balanced using less computational power than PSPACE.
 Basically we need to check that for all $z \in \zo^k$,
 \[
 |\mbox{NW-gen}(s)^{-1}(z) \cap \bn \times \zo^{\log n}| \leq \Delta \cdot \frac{|\bn| \cdot n}{K},
 \]
 for some constant $\Delta$. Using Sipser's method from~\cite{sip:c:randomness}, there is a $\Sigma_2^p$ predicate $R$ such that
 
\begin{itemize}
	\item [$\bullet$] $R(s,z)=1$ implies  
$|\mbox{NW-gen}(s)^{-1}(z) \cap \bn \times \zo^{\log n}| \leq 16 \cdot n$,
	\item [$\bullet$]
	$R(s,z)=0$ implies  
$|\mbox{NW-gen}(s)^{-1}(z) \cap \bn \times \zo^{\log n}| \geq 64 \cdot n$.
\end{itemize}
Thus there is a set $T' \subseteq \zo^{\tilde{n}}$ in $\Sigma_3^p$ such that for all $s \in T'$, $\mbox{NW-gen}(s)$ is $64$-balanced and $T'$ contains all $s$ such that $\mbox{NW-gen}(s)$ is $8$-balanced. Note that the second property implies that $|T'| \geq 0.99 \cdot 2^{\tilde{n}}$.

Now assumption H2 implies that there exists a pseudo-random generator $H: \zo^{c \log(\tilde{n})} \mapping \zo^{\tilde{n}}$ that fools $T'$. In particular it follows that with probability of $\sigma \in \zo^{c \log (\tilde{n})}$ at least $0.8$, $H(\sigma) \in T'$ and thus $\mbox{NW-gen}(H(\sigma))$ is $64$-balanced. The rest of the proof is identical with the proof of Theorem~\ref{t:main}.
 
 \qed \end{proof}
 The next result is the analog of Theorem~\ref{t:main} for the case when the set $B$ is in $\p$.
  \begin{theorem}
 \label{t:cdp}
 Assuming H2, the following holds: For every $B \in \p$, there exists a polynomial $p$ such that for all $n$, and for all $x \in \bn$, 
 \[
 \cd^{poly}(x) \leq \log(|\bn|) + O(\log n).
 \]
 \end{theorem}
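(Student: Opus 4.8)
The plan is to obtain Theorem~\ref{t:cdp} as an immediate corollary of parts (1) and (3) of Theorem~\ref{t:cdptechnical}, in exactly the way Theorem~\ref{t:main} is deduced from Theorem~\ref{t:maintechnical}. I would fix $B \in \p$ and let $p$ be the polynomial supplied by Theorem~\ref{t:cdptechnical}. For a given length $n$ and a string $x \in \bn$, let $z$ be the string guaranteed by that theorem, so that $|z| = \lceil \log(|\bn|)\rceil$ and $\cd^{p}(x \mid z) = O(\log n)$.

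Next I would turn the conditional distinguisher into an unconditional one by hard-wiring $z$. Let $q$ be a shortest program witnessing $\cd^{p}(x \mid z) = O(\log n)$: thus $|q| = O(\log n)$, $U(q,x,z)$ accepts, $U(q,v,z)$ rejects for every $v \neq x$, and $U(q,v,z)$ halts within $p(|v|+|z|)$ steps on every input (and, since $B \in \p$, no oracle is used --- this is already true of the distinguisher produced inside Theorem~\ref{t:cdptechnical}). Now let $r$ be the program that stores $z$ and $q$ encoded in a self-delimiting way --- which, by the encoding described in the Preliminaries, costs $|z| + |q| + O(\log |q|) = |z| + O(\log n)$ bits --- followed by a fixed constant-size routine that, on input $y$, decodes $z$ and $q$ and then simulates $U(q,y,z)$, accepting iff that simulation accepts. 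Then $U(r,y)$ accepts iff $y = x$, and $U(r,\cdot)$ runs in time polynomial in its input length because the simulation of $U(q,\cdot,z)$ does. Consequently $\cd^{poly}(x) \leq |r| + O(1) = |z| + O(\log n) = \log(|\bn|) + O(\log n)$, and since the running-time polynomial depends only on $p$ and the universal machine, this bound holds uniformly over all $x \in \bn$ and all $n$.

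I do not expect any real obstacle here; the argument is pure bookkeeping on top of Theorem~\ref{t:cdptechnical}. The only point needing a moment's attention is that the self-delimiting concatenation of $z$ with the short program $q$ contributes only $O(\log n)$ additional bits --- which is exactly where $|q| = O(\log n)$, i.e.\ property~(2), is used --- so that the leading term $\log(|\bn|)$ is preserved with no multiplicative loss.
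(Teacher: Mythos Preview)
Your proposal is correct and follows exactly the paper's approach: the paper's own proof is the single sentence ``This is an immediate consequence of (1) and (3) in Theorem~\ref{t:cdptechnical},'' and you have simply spelled out the routine hard-wiring of $z$ into the distinguisher.

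One small slip: in your last paragraph you attribute $|q| = O(\log n)$ to ``property~(2),'' but this is property~(3) --- $q$ witnesses $\cd^{p}(x \mid z) = O(\log n)$, not $C^p(z \mid x) = O(\log n)$. Property~(2) is not needed here at all (as the paper's proof also indicates by citing only (1) and (3)).
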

 \begin{proof}
 This is an immediate consequence of (1) and (3) in Theorem~\ref{t:cdptechnical}.
\qed \end{proof}
Next we consider the case when the set $B$ is in $\np$. The main point is that the assumption H1 can be replaced by H2, and that the distinguishing program does not need access to the oracle $\bn$ provided it is nondeterministic.
 \begin{theorem}
\label{t:cdnptechnical}
Assuming H2, the following holds: For every set $B$ in $\np$, there exists a polynomial $p$ such that  for every length $n$, and for every  string $x \in \bn$, there exists a string $z$ with the following properties:

(1) $|z| = \lceil \log (|\bn|) \rceil$,

(2) $C^p(z \mid x) = O(\log n)$,

(3) $\cd^{p, \bn} (x \mid z) = O(\log n)$.

(4) $\cnd^{p} (x \mid z) = O(\log n)$.
\end{theorem}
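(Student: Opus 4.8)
The plan is to re-run the proof of Theorem~\ref{t:maintechnical} essentially unchanged, making only the two modifications indicated in the statement: replacing H1 by the weaker H2 (the same modification carried out in Theorem~\ref{t:cdptechnical}, but now for an $\np$ set rather than a $\p$ set), and adjoining one extra distinguisher --- a nondeterministic one that makes no oracle query --- to obtain conclusion (4). Conclusions (1) and (2) are then immediate: we still put $z=E(x,0^{\log n})$, which has length $k=\lceil\log|\bn|\rceil$ and is computable from $x$ in polynomial time once the $O(\log n)$-bit advice $\sigma$ describing $E$ is supplied. Conclusion (3) is obtained verbatim by the deterministic oracle distinguisher of Theorem~\ref{t:maintechnical}; note that that distinguisher queries $\bn$ exactly once, to test ``$y\in\bn$'', and this is precisely the query that conclusion (4) removes.

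For the passage from H1 to H2, the first derandomization step (fooling the constant-depth $\poly(N)$-size circuit $G$ with $\mbox{NW-gen}$) is literally unchanged, since $G$ is non-uniform and may hard-wire the membership pattern of the fixed set $\bn$; so the claims in the proof of Theorem~\ref{t:maintechnical} hold as stated and $\{s : \mbox{NW-gen}(s)\text{ is }8\text{-balanced}\}$ still has measure at least $0.98$. Only the second step needs care, because deciding whether a given point lies in a cell $S_z^{(s)}=\{(u,v)\in\bn\times\zo^{\log n} : E(u,v)=z\}$ (with $E$'s table $\mbox{NW-gen}(s)$) is now an $\np$ predicate, so the good-seed set is no longer evidently low in the polynomial hierarchy. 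As in Theorem~\ref{t:cdptechnical}, the fix is to replace it by an approximate version obtained from Sipser-style hashing. Fix a pairwise-independent family $h:\zo^{n+\log n}\mapping\zo^{m}$ with $2^{m}=n^{3}$, so that a random $h$ is injective on any set of size $\le 8n$ with positive probability; then the predicate ``there exists $h$ that is injective on $S_z^{(s)}$'' is a $\Sigma_2^p$ predicate $R(s,z)$ --- the $\np$ membership test for $S_z^{(s)}$ enters only negatively, ``$(u,v)\notin S_z^{(s)}$'' being in $\conp$, so it folds into the universal quantifier of the hashing statement --- and $R$ satisfies: $R(s,z)=1$ whenever $|S_z^{(s)}|\le 8n$, and $R(s,z)=1$ implies $|S_z^{(s)}|\le 2^{m}=\poly(n)$. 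Hence $T'=\{s : \forall z\ R(s,z)=1\}$ contains all $8$-balanced seeds (so has measure $\ge 0.98$), every $s\in T'$ yields a function $E$ that is $\poly(n)$-balanced --- which is as good as a constant-balanced one, since it still forces $|U|\le\poly(n)$ and hence a Buhrman--Fortnow--Laplante program $p_x$ of length $2\log|U|+O(\log n)=O(\log n)$ --- and the complement $\{s : \exists z\ R(s,z)=0\}$ lies in $\Sigma_3^p$. So H2 supplies a generator $H:\zo^{c\log\tilde n}\mapping\zo^{\tilde n}$, computable in time $\poly(n)$, with $\prob_{\sigma}[H(\sigma)\in T']$ within $o(1)$ of $\prob_{s}[s\in T']\ge 0.98$; fixing any good $\sigma$ produces the desired $E$ with an $O(\log n)$-bit description, exactly as in Theorem~\ref{t:cdptechnical}.

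For conclusion (4) the plan is to exhibit a nondeterministic, oracle-free distinguisher for $x$ conditioned on $z$. Set $U=\{u\in\bn : \exists v\in\zo^{\log n},\ E(u,v)=z\}$ and let $p_x$ be the program given by Theorem~\ref{t:bfl} for $x$ inside $U$, so $|p_x|\le 2\log|U|+O(\log n)=O(\log n)$. The machine has $\sigma$, $p_x$, the fixed $\np$-verifier for $B$, and the algorithm $A$ of Theorem~\ref{t:bfl} hard-wired (total $O(\log n)$ bits), is given $z$, and on input $y$: (a) guesses a string $w$ and rejects the current path unless the verifier accepts $(y,w)$ as a witness for $y\in\bn$; (b) deterministically checks whether $E(y,v)=z$ for some of the $n$ candidates $v\in\zo^{\log n}$ ($E$ being polynomial-time computable from $\sigma$) and rejects if not; (c) runs $A(p_x,y)$ and accepts iff it accepts. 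If $y=x$, a witness for $x\in\bn$ exists so (a) succeeds on some path, $E(x,0^{\log n})=z$ so (b) succeeds, and $A(p_x,x)$ accepts, so some path accepts; if $y\neq x$, then when $y\notin\bn$ no path passes (a), when $y\in\bn\setminus U$ every path fails (b), and when $y\in\bn\cap U$ with $y\neq x$ a bad-witness path fails (a) while a good-witness path reaches (c) where $A(p_x,y)$ rejects since $y\in U\setminus\{x\}$ --- so in all cases every path rejects. The machine runs in polynomial time and is described by $|\sigma|+|p_x|+O(1)=O(\log n)$ bits, giving $\cnd^p(x\mid z)=O(\log n)$.

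The step I expect to be the real obstacle is the quantifier bookkeeping in the second paragraph: one must verify that approximately counting the $\np$ cell $S_z^{(s)}$ together with the outer ``$\forall z$'' keeps the good-seed condition at the third level of the polynomial hierarchy --- which is exactly what makes H2, rather than a stronger assumption, enough --- and one must be content with hashing certifying only a polynomial (not a constant) bound on cell sizes, which is harmless because it is absorbed into the $O(\log n)$ additive term. Conclusion (4) is, by contrast, routine: nondeterminism is precisely the resource needed to replace the one oracle query ``$y\in\bn$?'' that the distinguisher for conclusion (3) uses.
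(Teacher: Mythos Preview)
Your proposal is correct and follows essentially the same route as the paper: rerun Theorem~\ref{t:maintechnical}, replace the PSPACE good-seed test by a $\Sigma_3^p$ approximation via Sipser hashing (with the $\np$ membership test for $B$ entering negatively and folding into the universal quantifier, so $R$ stays $\Sigma_2^p$), and for (4) replace the single oracle query ``$y\in\bn$?'' by a nondeterministic guess of a witness. The only inessential difference is precision: the paper's hashing yields a $64$-balanced $E$, whereas your injectivity-into-$\zo^{3\log n}$ argument gives only $\poly(n)$-balance, which---as you correctly note---is absorbed into the $O(\log n)$ term.
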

 \begin{proof}
 (1), (2) and (3). We only need to show that in the proof of Theorem~\ref{t:maintechnical}, in case $B \in \np$, the assumption H1 can be replaced by the weaker assumption H2. This is done virtually in the same way as in the proof of Theorem~\ref{t:cdp}. The predicate $R$ also needs this time to check that certain strings are in $B$ and this involves an additional quantifier, but that
 quantifier can be merged with the existing quantifiers and $R$ remains in $\Sigma_2^p$.
 \smallskip
 
 (4). We need to show that, at the price of replacing $\cd$ by $\cnd$, the use of the oracle $\bn$ is no longer necessary. Note that the distinguisher procedure given in the proof of Theorem~\ref{t:maintechnical}, queries the oracle only once, and if the answer to that query is NO, then the algorithm rejects immediately. Thus, instead of making the query, a
 nondeterministic distinguisher can just guess a witness for the single query it makes.
 \qed \end{proof}
 The following is the analog of Theorem~\ref{t:main} in case the set $B$ is in $\np$.
 \begin{theorem}
 \label{t:cdnp}
  Assuming H2, the following holds: 
  
  (a) For every $B \in \np$, there exists a polynomial $p$, such that for all $n$, and for all $x \in \bn$, 
 \[
 \cd^{p, \bn}(x) \leq \log(|\bn|) + O(\log n).
 \]
 
 (b) For every $B \in \np$, there exists a polynomial $p$, such that for all $n$, and for all $x \in \bn$, 
 \[
 \cnd^{p}(x) \leq \log(|\bn|) + O(\log n).
 \]
 \end{theorem}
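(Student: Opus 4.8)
The plan is to deduce both parts directly from Theorem~\ref{t:cdnptechnical}, in exactly the way Theorem~\ref{t:main} was deduced from Theorem~\ref{t:maintechnical}: properties (1) and (3) will give part (a), and properties (1) and (4) will give part (b). So I would fix $B \in \np$, take the polynomial $p$ promised by Theorem~\ref{t:cdnptechnical}, and for a given length $n$ and string $x \in \bn$ take the string $z$ guaranteed by that theorem, which has $|z| = \lceil \log(|\bn|)\rceil$.

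For part (a): since $\cd^{p,\bn}(x \mid z) = O(\log n)$, there is an $\bn$-oracle program $q$ of length $O(\log n)$ that runs in time $p$ and, with conditional input $z$, accepts exactly $x$. I would then form a single unconditional distinguisher for $x$ by hard-wiring the pair $(z,q)$ into a fixed universal-style program that, on input $y$, decodes $z$ and $q$ and simulates $q$ on $y$ with conditional input $z$ (still with oracle $\bn$). Encoding $(z,q)$ self-delimitingly as recalled in Section~2 costs $|z| + |q| + 2\log|q| + O(1) \le \lceil\log(|\bn|)\rceil + O(\log n)$ bits, the simulation overhead is constant, and the running time stays polynomial, say $p'$. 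This program accepts exactly $x$, so $\cd^{p',\bn}(x) \le \log(|\bn|) + O(\log n)$.

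For part (b): I would repeat the same argument using property (4) in place of (3). Now there is a \emph{nondeterministic} program $q$ of length $O(\log n)$, running in time $p$ and using \emph{no} oracle, that distinguishes $x$ when given $z$; hard-wiring $(z,q)$ into a fixed nondeterministic wrapper and paying only the same self-delimiting and constant simulation overhead yields a nondeterministic, oracle-free distinguisher for $x$ of length at most $\log(|\bn|) + O(\log n)$ running in some polynomial time $p'$, hence $\cnd^{p'}(x) \le \log(|\bn|) + O(\log n)$.

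I do not expect a genuine obstacle here; all of the real work is in Theorem~\ref{t:cdnptechnical}. What remains is the routine bookkeeping: composing the polynomial time bounds, and checking that concatenating the length-$O(\log n)$ conditional program with the length-$\lceil\log(|\bn|)\rceil$ string $z$ in a self-delimiting fashion adds only $O(\log\log n) \subseteq O(\log n)$ extra bits. The one point to be mildly careful about is that the fixed wrapper must be able to tell where $z$ ends and $q$ begins; this is handled by prefixing the doubled bits of $bin(|q|)$ as in Section~2, whose cost is absorbed into the $O(\log n)$ term.
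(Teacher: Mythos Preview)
Your proposal is correct and matches the paper's own proof exactly: the paper simply says that (a) follows from (1) and (3) of Theorem~\ref{t:cdnptechnical}, and (b) from (1) and (4), leaving the self-delimiting encoding and polynomial-time simulation bookkeeping implicit. You have merely spelled out those routine details, and they are all handled correctly.
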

 \begin{proof}
 Statement (a) follows from (1) and (3) in Theorem~\ref{t:cdnptechnical}, and (b) follows from (1) and (4) in Theorem~\ref{t:cdnptechnical}.
 \qed \end{proof}
 \bibliography{c:/book-text/theory}

\bibliographystyle{alpha}

\newpage

\appendix
 
\end{document}